\newcommand{\Cay}{\mathrm{Cay}}
\newtheorem{theorem}{Theorem}[section]
\newtheorem{lemma}[theorem]{Lemma}
\newtheorem{cor}[theorem]{Corollary}
\newtheorem{example}{Example}
\theoremstyle{definition}
\numberwithin{equation}{section} 
\def\qed{\hfill$\Box$\vspace{12pt}}
\long\def\delete#1{}
\tikzstyle{vertex}=[circle, inner sep=1.2pt, minimum size=3pt]
\tikzstyle{filledvertex}=[circle, draw, fill, inner sep=1.2pt, minimum size=3pt]
\newcommand{\vertex}{\node[vertex]}
\tikzstyle{directed}=[postaction={decorate,
\begin{document}
\title {Constructing graphs having Laplacian pair state transfer by an edge perturbation}

\author{Wei Wang$^{a,b}$,~Xiaogang Liu$^{a,b,c,}$\thanks{Supported by the National Natural Science Foundation of China (No. 11601431), the Natural Science Foundation of Shaanxi Province (No. 2020JM-099) and the Natural Science Foundation of Qinghai Province  (No. 2020-ZJ-920).}~$^,$\thanks{ Corresponding author. Email addresses: wang-wei@mail.nwpu.edu.cn, xiaogliu@nwpu.edu.cn, wj66@mail.nwpu.edu.cn}~,~Jing Wang$^{a,b}$
\\[2mm]
{\small $^a$School of Mathematics and Statistics,}\\[-0.8ex]
{\small Northwestern Polytechnical University, Xi'an, Shaanxi 710072, P.R.~China}\\
{\small $^b$Xi'an-Budapest Joint Research Center for Combinatorics,}\\[-0.8ex]
{\small Northwestern Polytechnical University, Xi'an, Shaanxi 710129, P.R. China}\\
{\small $^c$School of Mathematics and Statistics,}\\[-0.8ex]
{\small Qinghai Nationalities University, Xining, Qinghai 810007, P.R. China}
}

\date{}

\openup 0.5\jot
\maketitle

\begin{abstract}
In this paper, we give some sufficient conditions for graphs with an edge perturbation between twin vertices to have Laplacian perfect pair state transfer as well as Laplacian pretty good pair state transfer. By those sufficient conditions, we also construct many new graphs having Laplacian perfect pair state transfer as well as Laplacian pretty good pair state transfer.

\smallskip

\emph{Keywords:} Laplacian perfect pair state transfer; Laplacian pretty good pair state transfer; Twin vertices.

\emph{Mathematics Subject Classification (2010):} 05C50, 81P68
\end{abstract}

\section{Introduction}

Let $G$ be a graph with the adjacency matrix $A_G$. Let $L_G=D_G-A_G$ denote the Laplacian matrix of $G$, where $D_G$ is the diagonal matrix with diagonal entries the degrees of vertices of $G$. The \emph{transition matrix} of $G$ relative to $L_{G}$ is defined by
$$
U_{L_{G}}(t) = \exp(-\mathrm{i}tL_{G})=\sum_{k=0}^{\infty}\frac{(-\mathrm{i})^{k}L_{G}^{k}t^{k}}{k!}, ~ t \in \mathbb{R},~\mathrm{i}=\sqrt{-1}.
$$
If $a$ and $b$ are two distinct vertices in $G$, we usually use $\{a,b\}$ to denote such a pair of vertices. Let $\mathbf{e}_{a}$ denote the unit vector with only the $a$-th entry equal to $1$ and all other entries equal to $0$. Then $\mathbf{e}_a-\mathbf{e}_b$ is called the \emph{pair state} of $\{a,b\}$. In particular, if $a$ and $b$ are connected by an edge in $G$, then such a pair state is called an \emph{edge state}. We say that a graph $G$ has \emph{Laplacian perfect pair state transfer} (Pair-LPST for short) \cite{QCh18, ChG19} from $\{a,b\}$ to $\{c,d\}$ at time $\tau$ if and only if
\begin{align*}
U_{L_{G}}(\tau)(\mathbf{e}_{a}-\mathbf{e}_{b})=\gamma(\mathbf{e}_{c}-\mathbf{e}_{d}),
\end{align*}
where $\gamma$ is a complex scalar and $\left|\gamma\right|=1$, which is equivalent to
\begin{align*}
\left|\frac{1}{2}(\mathbf e_{a}-\mathbf e_{b})^\top e^{-\mathrm{i}\tau L_{G}}(\mathbf e_{c}-\mathbf e_{d})\right|^{2}=1,
\end{align*}
where $\ast^\top$ denotes the transpose of $\ast$.
In this case,  we also say that $\mathbf e_{a}-\mathbf e_{b}$ (respectively, $\mathbf e_{c}-\mathbf e_{d}$)  has  Pair-LPST; otherwise, $\mathbf e_{a}-\mathbf e_{b}$ (respectively, $\mathbf e_{c}-\mathbf e_{d}$)  has  no Pair-LPST. In particular, if $\{a,b\}$ and $\{c,d\}$ are same, we say that $\{a,b\}$ (or $\mathbf e_{a}-\mathbf e_{b}$) is \emph{periodic} at time $\tau$ if and only if
\begin{align*}
\left|\frac{1}{2}(\mathbf e_{a}-\mathbf e_{b})^\top e^{-\mathrm{i}\tau L_{G}}(\mathbf e_{a}-\mathbf e_{b})\right|^{2}=1.
\end{align*}
Note that Pair-LPST can not always exist in a graph. Thus, a relaxation of Pair-LPST, the \emph{Laplacian pretty good pair state transfer} (short for Pair-LPGST), should be proposed. We say a graph $G$ has Pair-LPGST between $\{a,b\}$ and $\{c,d\}$ if for any $\epsilon>0$, there exists a time $\tau$ such that
\begin{align*}
\left|\frac{1}{2}(\mathbf e_{a}-\mathbf e_{b})^\top e^{-\mathrm{i}\tau L_{G}}(\mathbf e_{c}-\mathbf e_{d})\right|^{2}\geq 1-\epsilon.
\end{align*}
If both $\mathbf e_{a}-\mathbf e_{b}$ and $\mathbf e_{c}-\mathbf e_{d}$ are edge states in above definitions, then Pair-LPST and Pair-LPGST are also called Edge-LPST and Edge-LPGST, respectively.


Pair state transfers can be regarded as generalizations of state transfers, which are very important in quantum computing and quantum information processing \cite{Kay2010, ChristandlVZ2017, KemptonLY2017}. Up until now, many graphs have been shown to admit state transfers, including trees \cite{Bose03, CoutinhoL2015, Fan, GodsilKSS12}, Cayley graphs \cite{Basic11, CaoCL20, CaoF21, CaoWF20, CC,  HPal3, HPal5, HPal2021, Tan19, Tm19}, distance regular graphs \cite{Coutinho15} and some graph operations such as NEPS \cite{LiLZZ21, HPal1, HPal4, SZ}, coronas \cite{Ack, AckBCMT16}, edge complemented coronas \cite{WangL2021} and total graphs \cite{LiuW2021}. For more information, we refer the reader to \cite{Coutinho14, CGodsil, Godsil12, HZ, Zhou14}. Numerical results in \cite{ChG19} have shown that there should exist more graphs having pair state transfers than state transfers for a certain number of vertices. However, only few results on pair state transfers are given now. In 2018, Chen first proposed the concept of Edge-LPST in his thesis \cite{QCh18}, he presented many results on Edge-LPST, including the existence of Edge-LPST in paths and cycles, and he also gave two methods to construct new graphs with Edge-LPST. Those results can also be find in the published Paper \cite{ChG19}, where Pair-LPST was used instead of Edge-LPST. In 2021, Luo, Cao, et al. \cite{LCXC21} gave necessary and sufficient conditions for the existence of Edge-LPST in Cayley graphs of dihedral groups. With these conditions, the authors also proposed several concrete constructions of Cayley graphs admitting Edge-LPST. In the same year,  Cao \cite{Cao2021}  gave a characterization of cubelike graphs having Edge-LPST. The author also gave some concrete constructions of graphs admitting Edge-LPST. These seem to be the newest results on Pair-LPST.

Let $a$ and $b$ be two distinct vertices in a graph $G$. The edge perturbed graph obtained from $G$ by increasing the weight of the edge between $a$ and $b$ by $\alpha$ is denoted by $G+\alpha\{a,b\}$, $\alpha\in \mathbb{R}$. In particular, if $a$ and $b$ are adjacent vertices, then $G-\{a,b\}$ denotes the graph obtained from $G$ by deleting the edge between $a$ and $b$. If $a$ and $b$ are non-adjacent vertices, then $G+\{a,b\}$ denotes the graph obtained from $G$ by adding the edge between $a$ and $b$. Define
$$
M=(\mathbf e_{a}-\mathbf e_{b}) (\mathbf e_{a}-\mathbf e_{b})^\top.
$$
Then the Laplacian matrix of $G+\alpha\{a,b\}$ is denoted by
$$
L_{G+\alpha\{a,b\}}=L_G+\alpha M.
$$
Let $N(a)$ denote the set of all neighbours of a vertex $a$ in $G$. If $a$ and $b$ are two distinct vertices and
$$
N(a)\setminus\{b\}=N(b)\setminus\{a\},
$$
then $a$ and $b$ are called \emph{twin vertices}.

In this paper, we investigate the existence of Pair-LPST and Pair-LPGST in graphs with an edge perturbation between twin vertices. We give some sufficient conditions for edge perturbed graphs to have Pair-LPST as well as Pair-LPGST. By those sufficient conditions, we construct many new graphs having Pair-LPST as well as Pair-LPGST.





\section{Pair-LPST in edge perturbed graphs}
\label{Sec:LPST-EPG-1}

In this section, we investigate the existence of Pair-LPST in graphs with an edge perturbation between twin vertices.

\begin{lemma}\emph{(See \cite[Proposition~1]{Pal2022})}\label{TwinTrans-1}
 Let $G$ be a graph on $n$ vertices. Suppose that $a$ and $b$ are a pair of twin vertices in $G$.  Let $M=\left(\mathbf{e}_{a}-\mathbf{e}_{b}\right)\left(\mathbf{e}_{a}-\mathbf{e}_{b}\right)^{T}$ and $\alpha \in \mathbb{R}$. Then the transition matrix of the edge perturbed graph $G+\alpha\{a,b\}$ relative to $L_{G+\alpha\{a,b\}}$ is given by
$$
U_{L_{G+\alpha\{a,b\}}}(t)=U_{L_G}(t)\left(I_n+\frac{1}{2}(\exp (-2 \mathrm{i} \alpha t)-1) M\right),
$$
where $U_{L_G}(t)$ is the transition matrix of the unperturbed graph $G$ relative to $L_{G}$, and $I_n$ is the identity matrix of size $n$.
\end{lemma}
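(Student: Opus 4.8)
The plan is to diagonalize, or at least block-diagonalize, the operator $L_{G+\alpha\{a,b\}} = L_G + \alpha M$ relative to the eigenspace structure imposed by the twin-vertex condition, and then exponentiate. First I would record the elementary but crucial observation that $M = (\mathbf{e}_a-\mathbf{e}_b)(\mathbf{e}_a-\mathbf{e}_b)^\top$ is (twice) a rank-one projection: writing $v = \mathbf{e}_a - \mathbf{e}_b$, we have $v^\top v = 2$, so $M^2 = v(v^\top v)v^\top = 2M$, and hence $\tfrac12 M$ is idempotent. Next I would use the twin hypothesis $N(a)\setminus\{b\} = N(b)\setminus\{a\}$ to show that $v$ is an eigenvector of $L_G$: a direct computation of $L_G v$ entrywise shows that the contributions from the common neighbours cancel, and the remaining degree/adjacency terms combine to give $L_G v = \mu v$ for the appropriate scalar $\mu$ (namely $\mu = \deg(a)+[a\sim b]$, though the precise value is irrelevant here). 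In particular $L_G$ and $M$ commute, since $M$ acts as a scalar on $\mathrm{span}\{v\}$ and as zero on its orthogonal complement, which is $L_G$-invariant.

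With commutativity in hand, the exponential factors: $U_{L_{G+\alpha\{a,b\}}}(t) = \exp(-\mathrm{i}t L_G)\exp(-\mathrm{i}t\alpha M)$. It remains to compute $\exp(-\mathrm{i}t\alpha M)$ in closed form. Using $M^k = 2^{k-1}M$ for $k\ge 1$, the power series gives
$$
\exp(-\mathrm{i}t\alpha M) = I_n + \Big(\sum_{k=1}^{\infty}\frac{(-\mathrm{i}t\alpha)^k 2^{k-1}}{k!}\Big)M = I_n + \frac{1}{2}\big(\exp(-2\mathrm{i}\alpha t)-1\big)M,
$$
since $\sum_{k\ge 1}\frac{(-2\mathrm{i}\alpha t)^k}{k!} = \exp(-2\mathrm{i}\alpha t) - 1$. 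Substituting this back and recalling $U_{L_G}(t) = \exp(-\mathrm{i}t L_G)$ yields exactly the claimed identity
$$
U_{L_{G+\alpha\{a,b\}}}(t) = U_{L_G}(t)\Big(I_n + \tfrac12(\exp(-2\mathrm{i}\alpha t)-1)M\Big).
$$

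The only real obstacle is the verification that $v = \mathbf{e}_a-\mathbf{e}_b$ is an eigenvector of $L_G$; everything after that is a two-line series manipulation. That verification is itself routine once one splits the vertex set into $\{a\}$, $\{b\}$, the common neighbourhood, and the rest, and checks each block of $(L_G v)_x = \sum_x (L_G)_{xy} v_y$ separately — the twin condition is exactly what makes the "common neighbourhood" block vanish and the $\{a\},\{b\}$ blocks symmetric. Since the statement is quoted from \cite[Proposition~1]{Pal2022}, I would also be content to simply cite it; but the self-contained argument above is short enough to include. One should note in passing that the formula does \emph{not} require $a,b$ to be adjacent, nor $\alpha$ to have any particular sign, so it simultaneously covers edge addition ($\alpha>0$ on a non-edge), edge deletion (choosing $\alpha$ to cancel the existing weight), and general reweighting.
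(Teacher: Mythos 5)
Your proof is correct and complete: the computation showing $L_G(\mathbf{e}_a-\mathbf{e}_b)=(\deg(a)+[a\sim b])(\mathbf{e}_a-\mathbf{e}_b)$ under the twin condition, the resulting commutativity of $L_G$ with $M$, and the series evaluation of $\exp(-\mathrm{i}t\alpha M)$ via $M^k=2^{k-1}M$ all check out. The paper itself offers no proof here --- it simply cites \cite[Proposition~1]{Pal2022} --- and your argument is essentially the standard one from that reference, so nothing further is needed.
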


\begin{theorem}\label{Pair-LPST-Twin-1}
Let $G$, $a$, $b$, $M$ and $\alpha$ be as in Lemma \ref{TwinTrans-1}. Suppose that $\mathbf{e}_{p}-\mathbf{e}_{q}$ has Pair-LPST at time $\tau$ in the unperturbed graph $G$.
\begin{itemize}
  \item[\rm (a)] The pair state $\mathbf{e}_{a}-\mathbf{e}_{b}$ has no Pair-LPST at time $\tau$ in the edge perturbed graph $G+\alpha\{a,b\}$.
  \item[\rm (b)] If $p=a$ and $q\not=a, b$, then $\mathbf{e}_{p}-\mathbf{e}_{q}$ has Pair-LPST at time $\tau$ in the edge perturbed graph $G+\alpha\{a,b\}$ provided that $\alpha\tau\in \pi\mathbb{Z}$.
  \item[\rm (c)] If $p\not=a, b$ and $q\not=a, b$, then $\mathbf{e}_{p}-\mathbf{e}_{q}$ has Pair-LPST at time $\tau$ in the edge perturbed graph $G+\alpha\{a,b\}$.
\end{itemize}
\end{theorem}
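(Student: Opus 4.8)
The plan is to push everything through the single factorization of Lemma~\ref{TwinTrans-1}. The one computation that drives all three parts is the action of $M$ on a pair state: for distinct vertices $x,y$,
\[
M(\mathbf{e}_{x}-\mathbf{e}_{y})=\big((\mathbf{e}_{a}-\mathbf{e}_{b})^{\top}(\mathbf{e}_{x}-\mathbf{e}_{y})\big)(\mathbf{e}_{a}-\mathbf{e}_{b}),
\]
and the scalar $(\mathbf{e}_{a}-\mathbf{e}_{b})^{\top}(\mathbf{e}_{x}-\mathbf{e}_{y})$ equals $\pm 2$ if $\{x,y\}=\{a,b\}$, equals $\pm 1$ if exactly one of $x,y$ lies in $\{a,b\}$, and equals $0$ if $\{x,y\}\cap\{a,b\}=\emptyset$. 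Hence the perturbation factor $I_{n}+\tfrac12(\exp(-2\mathrm{i}\alpha t)-1)M$ acts on a pair state as the identity in the $0$ case, adds a multiple of $\mathbf{e}_{a}-\mathbf{e}_{b}$ in the $\pm1$ case, and multiplies by $\exp(-2\mathrm{i}\alpha t)$ in the $\pm2$ case. Throughout, write $U_{L_{G}}(\tau)(\mathbf{e}_{p}-\mathbf{e}_{q})=\gamma(\mathbf{e}_{c}-\mathbf{e}_{d})$ with $|\gamma|=1$ for the assumed transfer in $G$, noting $\{c,d\}\neq\{p,q\}$.

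For part (c), $p,q\notin\{a,b\}$ puts us in the $0$ case, so Lemma~\ref{TwinTrans-1} gives $U_{L_{G+\alpha\{a,b\}}}(\tau)(\mathbf{e}_{p}-\mathbf{e}_{q})=U_{L_{G}}(\tau)(\mathbf{e}_{p}-\mathbf{e}_{q})=\gamma(\mathbf{e}_{c}-\mathbf{e}_{d})$ for every $\alpha$, which is Pair-LPST at time $\tau$ in $G+\alpha\{a,b\}$. For part (b), take $p=a$, $q\notin\{a,b\}$; the scalar is $1$, so
\[
U_{L_{G+\alpha\{a,b\}}}(\tau)(\mathbf{e}_{a}-\mathbf{e}_{q})
=U_{L_{G}}(\tau)\Big(\mathbf{e}_{a}-\mathbf{e}_{q}+\tfrac12\big(\exp(-2\mathrm{i}\alpha\tau)-1\big)(\mathbf{e}_{a}-\mathbf{e}_{b})\Big).
\]
Imposing $\alpha\tau\in\pi\mathbb{Z}$ makes $\exp(-2\mathrm{i}\alpha\tau)-1=0$, so the right-hand side collapses to $U_{L_{G}}(\tau)(\mathbf{e}_{a}-\mathbf{e}_{q})=\gamma(\mathbf{e}_{c}-\mathbf{e}_{d})$, again Pair-LPST at time $\tau$ in the perturbed graph.

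For part (a), I would first record the (standard) fact that when $a,b$ are twins, $\mathbf{e}_{a}-\mathbf{e}_{b}$ is an eigenvector of $L_{G}$: the twins share a common degree $d$, and comparing the $a$-th, $b$-th, and remaining columns of $L_{G}$ gives $L_{G}(\mathbf{e}_{a}-\mathbf{e}_{b})=\lambda(\mathbf{e}_{a}-\mathbf{e}_{b})$ with $\lambda=d+1$ if $a\sim b$ and $\lambda=d$ otherwise. Then $U_{L_{G}}(\tau)(\mathbf{e}_{a}-\mathbf{e}_{b})=\exp(-\mathrm{i}\lambda\tau)(\mathbf{e}_{a}-\mathbf{e}_{b})$, while the $\pm2$ case above turns the perturbation factor into multiplication by $\exp(-2\mathrm{i}\alpha\tau)$; combining via Lemma~\ref{TwinTrans-1},
\[
U_{L_{G+\alpha\{a,b\}}}(\tau)(\mathbf{e}_{a}-\mathbf{e}_{b})=\exp\big(-\mathrm{i}(\lambda+2\alpha)\tau\big)(\mathbf{e}_{a}-\mathbf{e}_{b}),
\]
a unit scalar times $\mathbf{e}_{a}-\mathbf{e}_{b}$ itself. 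Since a pair state $\mathbf{e}_{c}-\mathbf{e}_{d}$ is a scalar multiple of $\mathbf{e}_{a}-\mathbf{e}_{b}$ only when $\{c,d\}=\{a,b\}$, the pair $\{a,b\}$ is merely periodic at time $\tau$ in $G+\alpha\{a,b\}$ and therefore has no Pair-LPST there. The only genuinely delicate points in the whole argument are small: establishing the twin-eigenvector fact with the correct adjacent/non-adjacent split, and invoking the convention that ``Pair-LPST'' excludes the periodic case $\{c,d\}=\{a,b\}$; everything else is bookkeeping with the factorization in Lemma~\ref{TwinTrans-1}.
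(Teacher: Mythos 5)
Your proof is correct and follows essentially the same route as the paper: all three parts reduce to computing $M(\mathbf{e}_x-\mathbf{e}_y)$ and feeding the result through the factorization in Lemma~\ref{TwinTrans-1}, and your parts (b) and (c) are line-for-line the paper's argument. The only difference is in part (a), where the paper simply cites \cite[Theorem~3.12]{ChG19} for the fact that twin vertices never admit Pair-LPST, whereas you make the step self-contained by showing $\mathbf{e}_a-\mathbf{e}_b$ is an eigenvector of $L_G$ (with eigenvalue $d+1$ or $d$ according to adjacency), so the perturbed evolution can only return a unimodular multiple of $\mathbf{e}_a-\mathbf{e}_b$ itself, i.e.\ periodicity rather than transfer; this is a correct and slightly more informative substitute for the citation.
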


\begin{proof}
(a) Note that
$$
M(\mathbf{e}_{a}-\mathbf{e}_{b}) =\left(\mathbf{e}_{a}-\mathbf{e}_{b}\right) \left(\mathbf{e}_{a}-\mathbf{e}_{b}\right)^{T} \left(\mathbf{e}_{a}-\mathbf{e}_{b}\right) =2\left(\mathbf{e}_{a}-\mathbf{e}_{b}\right).
$$
Then
\begin{align*}
  U_{L_{G+\alpha\{a,b\}}}(t)\left(\mathbf{e}_{a}-\mathbf{e}_{b}\right) &=U_{L_G}(t)\left(I_n+\frac{1}{2}(\exp (-2 \mathrm{i} \alpha t)-1) M\right)\left(\mathbf{e}_{a}-\mathbf{e}_{b}\right)\\
  &=\exp (-2 \mathrm{i} \alpha t)U_{L_G}(t)\left(\mathbf{e}_{a}-\mathbf{e}_{b}\right).
\end{align*}
Note \cite[Theorem 3.12]{ChG19} that if $a$ and $b$ are twin vertices, then
$\mathbf{e}_{a}-\mathbf{e}_{b}$ can never have Pair-LPST in $G$. Thus, by the definition of Pair-LPST, (a) can be verified immediately.

(b) In this case,
$$
M(\mathbf{e}_{a}-\mathbf{e}_{q}) =\left(\mathbf{e}_{a}-\mathbf{e}_{b}\right) \left(\mathbf{e}_{a}-\mathbf{e}_{b}\right)^{T} \left(\mathbf{e}_{a}-\mathbf{e}_{q}\right) =\left(\mathbf{e}_{a}-\mathbf{e}_{b}\right).
$$
Thus, if $\alpha\tau\in \pi\mathbb{Z}$, then
\begin{align*}
  U_{L_{G+\alpha\{a,b\}}}(\tau)\left(\mathbf{e}_{a}-\mathbf{e}_{q}\right) &=U_{L_G}(\tau)\left(I_n+\frac{1}{2}(\exp (-2 \mathrm{i} \alpha\tau)-1) M\right)\left(\mathbf{e}_{a}-\mathbf{e}_{q}\right)\\
  &=U_{L_G}(\tau)\left(\mathbf{e}_{a}-\mathbf{e}_{q}\right).
\end{align*}
By the definition of Pair-LPST, $\mathbf{e}_{a}-\mathbf{e}_{q}$ has Pair-LPST at time $\tau$ in the edge perturbed graph $G+\alpha\{a,b\}$.

(c) In this case,
$$
M(\mathbf{e}_{p}-\mathbf{e}_{q}) =\left(\mathbf{e}_{a}-\mathbf{e}_{b}\right) \left(\mathbf{e}_{a}-\mathbf{e}_{b}\right)^{T} \left(\mathbf{e}_{p}-\mathbf{e}_{q}\right) =0.
$$
Then
\begin{align*}
  U_{L_{G+\alpha\{a,b\}}}(t)\left(\mathbf{e}_{p}-\mathbf{e}_{q}\right) &= U_{L_G}(t)\left(\mathbf{e}_{p}-\mathbf{e}_{q}\right).
\end{align*}
Thus, by the definition of Pair-LPST again, $\mathbf{e}_{p}-\mathbf{e}_{q}$ has Pair-LPST at time $\tau$ in the edge perturbed graph $G+\alpha\{a,b\}$.
\qed
\end{proof}



\begin{example}
{\em Let $K_{2,4n}$ denote a complete bipartite graph consisting of an independent set of $2$ vertices completely joined to an independent set of $4n$ vertices, shown in Figure \ref{ComBipGra-Fig-1}. It was shown \cite[Theorem 5.2.2]{QCh18} that $K_{2,4n}$ has Pair-LPST from  $\mathbf{e}_a-\mathbf{e}_i$ to $\mathbf{e}_b- \mathbf{e}_i$  for $i \in \{1, 2, \ldots, 4n\}$ at time $\tau=\pi/2$.

\begin{figure}[H]
\begin{center}
\begin{tikzpicture}[x=0.75cm, y=0.55cm]
\tikzstyle{vertex}=[circle,inner sep=1.6pt, minimum size=0.1pt]

\vertex (a)[fill] at (0,0)[label=below:$1$]{};
\vertex (b)[fill] at (2,0)[label=below:$2$]{};
\vertex (c)[fill] at (5,0)[label=below:$4n-1$]{};
\vertex (d)[fill] at (7,0)[label=below:$4n$]{};
\vertex (g1) at (3.5,0.5)[label=below:$\ldots$]{};

\vertex (e)[fill] at (2.4,4)[label=above:$a$]{};
\vertex (f)[fill] at (4.6,4)[label=above:$b$]{};

\draw[line width=.3mm,line cap=round](a)--(e);
\draw[line width=.3mm,line cap=round](a)--(f);
\draw[line width=.3mm,line cap=round](b)--(e);
\draw[line width=.3mm,line cap=round](b)--(f);
\draw[line width=.3mm,line cap=round](c)--(e);
\draw[line width=.3mm,line cap=round](c)--(f);
\draw[line width=.3mm,line cap=round](d)--(e);
\draw[line width=.3mm,line cap=round](d)--(f);
\end{tikzpicture}
\end{center}
\vspace{-0.3cm}
\caption{The complete bipartite graph $K_{2,4n}$}
\label{ComBipGra-Fig-1}
\end{figure}
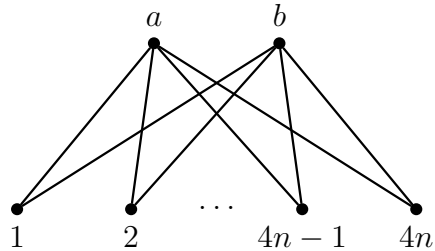

If we add an edge between $a$ and $b$ with weight $\alpha=2k$ with $k\in \mathbb{Z}$, shown in Figure \ref{ComBipGra-Fig-2}, then $\alpha\tau=k\pi\in\pi\mathbb{Z}$. By Theorem \ref{Pair-LPST-Twin-1} (b), the edge perturbed graph $K_{2,4n}+\alpha\{a,b\}$  has Pair-LPST from $\mathbf{e}_a-\mathbf{e}_i$ to $\mathbf{e}_b- \mathbf{e}_i$  for $i \in \{1, 2, \ldots, 4n\}$ at time $\tau=\pi/2$.

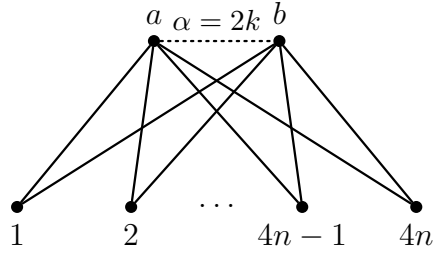
\begin{figure}[H]
\begin{center}
\begin{tikzpicture}[x=0.75cm, y=0.55cm]
\tikzstyle{vertex}=[circle,inner sep=1.6pt, minimum size=0.1pt]

\vertex (a)[fill] at (0,0)[label=below:$1$]{};
\vertex (b)[fill] at (2,0)[label=below:$2$]{};
\vertex (c)[fill] at (5,0)[label=below:$4n-1$]{};
\vertex (d)[fill] at (7,0)[label=below:$4n$]{};
\vertex (g1) at (3.5,0.5)[label=below:$\ldots$]{};

\vertex (e)[fill] at (2.4,4)[label=above:$a$]{};
\vertex (f)[fill] at (4.6,4)[label=above:$b$]{};

\draw[line width=.3mm,line cap=round](a)--(e);
\draw[line width=.3mm,line cap=round](a)--(f);
\draw[line width=.3mm,line cap=round](b)--(e);
\draw[line width=.3mm,line cap=round](b)--(f);
\draw[line width=.3mm,line cap=round](c)--(e);
\draw[line width=.3mm,line cap=round](c)--(f);
\draw[line width=.3mm,line cap=round](d)--(e);
\draw[line width=.3mm,line cap=round](d)--(f);

\draw[dotted, line width=.3mm,line cap=round](e)--(f);
\vertex (g2) at (3.5,3.85)[label=above:\small{$\alpha=2k$}]{};

\end{tikzpicture}
\end{center}
\vspace{-0.3cm}
\caption{The edge perturbed graph $K_{2,4n}+\alpha\{a,b\}$}
\label{ComBipGra-Fig-2}
\end{figure}

If we add an edge between $1$ and $2$, shown in Figure \ref{ComBipGra-Fig-3}, by Theorem \ref{Pair-LPST-Twin-1} (c), the edge perturbed graph $K_{2,4n}+ \{1,2\}$  has Pair-LPST from $\mathbf{e}_a-\mathbf{e}_i$ to $\mathbf{e}_b- \mathbf{e}_i$  for $i \in \{3, 4, \ldots, 4n\}$ at time $\tau=\pi/2$. Similarly, we can continue to add edges between other pair of vertices to obtain new graphs having Pair-LPST. Here, we omit the details, since one can easily achieve them by Theorem \ref{Pair-LPST-Twin-1} (c).

\begin{figure}[H]
\begin{center}
\begin{tikzpicture}[x=0.75cm, y=0.55cm]
\tikzstyle{vertex}=[circle,inner sep=1.6pt, minimum size=0.1pt]

\vertex (a)[fill] at (0,0)[label=below:$1$]{};
\vertex (b)[fill] at (2,0)[label=below:$2$]{};
\vertex (c)[fill] at (5,0)[label=below:$4n-1$]{};
\vertex (d)[fill] at (7,0)[label=below:$4n$]{};
\vertex (g1) at (3.5,0.5)[label=below:$\ldots$]{};

\vertex (e)[fill] at (2.4,4)[label=above:$a$]{};
\vertex (f)[fill] at (4.6,4)[label=above:$b$]{};

\draw[line width=.3mm,line cap=round](a)--(e);
\draw[line width=.3mm,line cap=round](a)--(f);
\draw[line width=.3mm,line cap=round](b)--(e);
\draw[line width=.3mm,line cap=round](b)--(f);
\draw[line width=.3mm,line cap=round](c)--(e);
\draw[line width=.3mm,line cap=round](c)--(f);
\draw[line width=.3mm,line cap=round](d)--(e);
\draw[line width=.3mm,line cap=round](d)--(f);

\draw[line width=.3mm,line cap=round](a)--(b);

\end{tikzpicture}
\end{center}
\vspace{-0.3cm}
\caption{The edge perturbed graph $K_{2,4n}+\{1,2\}$}
\label{ComBipGra-Fig-3}
\end{figure}
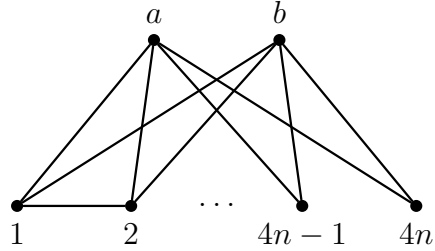

}
\end{example}



\begin{theorem}\label{Pair-LPST-Twin-2}
Let $G$, $a$, $b$, $M$ and $\alpha$ be as in Lemma \ref{TwinTrans-1}. Suppose that $\mathbf{e}_{p}-\mathbf{e}_{q}$ is periodic at time $\tau$ in the unperturbed graph $G$.
\begin{itemize}
  \item[\rm (a)] If $p=a$ and $q=b$, then $\mathbf{e}_{p}-\mathbf{e}_{q}$ is periodic at time $\tau$ in the edge perturbed graph $G+\alpha\{a,b\}$.
  \item[\rm (b)] If $p=a$, $q\not=a, b$ and $2\alpha\tau\in \pi(2\mathbb{Z}+1)$, then $\mathbf{e}_{p}-\mathbf{e}_{q}$ has Pair-LPST at time $\tau$ in the edge perturbed graph $G+\alpha\{a,b\}$. More precisely, $G+\alpha\{a,b\}$ has Pair-LPST at time $\tau$ from  $\mathbf{e}_{a}-\mathbf{e}_{q}$ to $\mathbf{e}_{b}-\mathbf{e}_{q}$ for $q\not=a, b$.
  \item[\rm (c)] If $p\not=a, b$ and $q\not=a, b$, then $\mathbf{e}_{p}-\mathbf{e}_{q}$ is periodic at time $\tau$ in the edge perturbed graph $G+\alpha\{a,b\}$.
\end{itemize}
\end{theorem}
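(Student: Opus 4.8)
The plan is to read everything off Lemma~\ref{TwinTrans-1} together with the three elementary identities for $M$ acting on pair states that were already computed in the proof of Theorem~\ref{Pair-LPST-Twin-1}, namely $M(\mathbf{e}_{a}-\mathbf{e}_{b})=2(\mathbf{e}_{a}-\mathbf{e}_{b})$, $M(\mathbf{e}_{a}-\mathbf{e}_{q})=\mathbf{e}_{a}-\mathbf{e}_{b}$ when $q\neq a,b$, and $M(\mathbf{e}_{p}-\mathbf{e}_{q})=0$ when $p,q\neq a,b$. I will also use the standard reformulation of periodicity: since $U_{L_H}(\tau)=\exp(-\mathrm i\tau L_H)$ is unitary and $\|\mathbf{e}_x-\mathbf{e}_y\|^2=2$, the equality case of the Cauchy--Schwarz inequality shows that $\mathbf{e}_x-\mathbf{e}_y$ is periodic at time $\tau$ in $H$ if and only if $U_{L_H}(\tau)(\mathbf{e}_x-\mathbf{e}_y)=\gamma(\mathbf{e}_x-\mathbf{e}_y)$ for some $\gamma$ with $|\gamma|=1$; the same equivalence (with $\mathbf{e}_c-\mathbf{e}_d$ on the right) characterises Pair-LPST from $\{a,b\}$ to $\{c,d\}$.

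Parts (a) and (c) are then immediate. For (a), taking $p=a$, $q=b$ and applying the first $M$-identity gives $\bigl(I_n+\tfrac12(\exp(-2\mathrm i\alpha\tau)-1)M\bigr)(\mathbf{e}_a-\mathbf{e}_b)=\exp(-2\mathrm i\alpha\tau)(\mathbf{e}_a-\mathbf{e}_b)$, so by Lemma~\ref{TwinTrans-1} we get $U_{L_{G+\alpha\{a,b\}}}(\tau)(\mathbf{e}_a-\mathbf{e}_b)=\exp(-2\mathrm i\alpha\tau)\,U_{L_G}(\tau)(\mathbf{e}_a-\mathbf{e}_b)$, and the right-hand side is a unit-modulus scalar times $\mathbf{e}_a-\mathbf{e}_b$ by periodicity in $G$. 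For (c), the third $M$-identity makes the bracketed factor act as the identity on $\mathbf{e}_p-\mathbf{e}_q$, whence $U_{L_{G+\alpha\{a,b\}}}(\tau)(\mathbf{e}_p-\mathbf{e}_q)=U_{L_G}(\tau)(\mathbf{e}_p-\mathbf{e}_q)$, which is again a unit scalar times $\mathbf{e}_p-\mathbf{e}_q$. In both cases the definition of periodicity in $G+\alpha\{a,b\}$ is verified directly.

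Part (b) is the real content. Using the second $M$-identity, $\bigl(I_n+\tfrac12(\exp(-2\mathrm i\alpha\tau)-1)M\bigr)(\mathbf{e}_a-\mathbf{e}_q)=(\mathbf{e}_a-\mathbf{e}_q)+\tfrac12(\exp(-2\mathrm i\alpha\tau)-1)(\mathbf{e}_a-\mathbf{e}_b)$. The hypothesis $2\alpha\tau\in\pi(2\mathbb{Z}+1)$ forces $\exp(-2\mathrm i\alpha\tau)=-1$, so the coefficient $\tfrac12(\exp(-2\mathrm i\alpha\tau)-1)$ equals $-1$, and the bracketed vector collapses to $(\mathbf{e}_a-\mathbf{e}_q)-(\mathbf{e}_a-\mathbf{e}_b)=\mathbf{e}_b-\mathbf{e}_q$. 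Therefore $U_{L_{G+\alpha\{a,b\}}}(\tau)(\mathbf{e}_a-\mathbf{e}_q)=U_{L_G}(\tau)(\mathbf{e}_b-\mathbf{e}_q)$. To conclude I need this to be a unit scalar times $\mathbf{e}_b-\mathbf{e}_q$, i.e.\ that $\mathbf{e}_b-\mathbf{e}_q$ is also periodic at time $\tau$ in $G$. Here I would use that $a,b$ are twin vertices: the transposition $(a\,b)$ preserves adjacency (and, since $a,b$ have equal degree, the diagonal $D_G$), so the associated permutation matrix $P$ satisfies $PL_GP=L_G$ and hence commutes with $U_{L_G}(\tau)$. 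Since $q\neq a,b$, $P(\mathbf{e}_a-\mathbf{e}_q)=\mathbf{e}_b-\mathbf{e}_q$; applying $P$ to the periodicity relation $U_{L_G}(\tau)(\mathbf{e}_a-\mathbf{e}_q)=\gamma(\mathbf{e}_a-\mathbf{e}_q)$ yields $U_{L_G}(\tau)(\mathbf{e}_b-\mathbf{e}_q)=\gamma(\mathbf{e}_b-\mathbf{e}_q)$ with the same $\gamma$. Combining, $U_{L_{G+\alpha\{a,b\}}}(\tau)(\mathbf{e}_a-\mathbf{e}_q)=\gamma(\mathbf{e}_b-\mathbf{e}_q)$, $|\gamma|=1$, which is exactly Pair-LPST at time $\tau$ from $\mathbf{e}_a-\mathbf{e}_q$ to $\mathbf{e}_b-\mathbf{e}_q$.

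The one step requiring thought is the last part of (b): upgrading periodicity of $\mathbf{e}_a-\mathbf{e}_q$ in $G$ to periodicity of $\mathbf{e}_b-\mathbf{e}_q$ in $G$, so that the output vector $U_{L_G}(\tau)(\mathbf{e}_b-\mathbf{e}_q)$ stays proportional to $\mathbf{e}_b-\mathbf{e}_q$; the twin-vertex automorphism supplies this cleanly. Everything else reduces to a substitution into Lemma~\ref{TwinTrans-1} followed by reading off the definitions, and one could in principle replace the permutation-matrix argument by a direct comparison of the eigenprojections of $L_G$ at $\mathbf{e}_a$ and $\mathbf{e}_b$, but the automorphism route is the shortest.
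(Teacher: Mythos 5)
Your proof is correct, and parts (a) and (c) match the paper's argument (which simply reuses the computations from Theorem \ref{Pair-LPST-Twin-1}). For part (b), however, you take a genuinely different route. You push the perturbation through the pair state $\mathbf{e}_a-\mathbf{e}_q$ in the ``forward'' direction, obtaining $U_{L_{G+\alpha\{a,b\}}}(\tau)(\mathbf{e}_a-\mathbf{e}_q)=U_{L_G}(\tau)(\mathbf{e}_b-\mathbf{e}_q)$, and you then need the extra fact that $\mathbf{e}_b-\mathbf{e}_q$ is also periodic in $G$; you supply this via the transposition $(a\,b)$, which is an automorphism of $G$ because $a,b$ are twins, so the associated permutation matrix commutes with $U_{L_G}(\tau)$ and transports the periodicity of $\mathbf{e}_a-\mathbf{e}_q$ to $\mathbf{e}_b-\mathbf{e}_q$ with the same phase. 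The paper instead runs the computation ``backwards'': it applies the perturbed transition matrix to $\mathbf{e}_b-\mathbf{e}_q$, uses $M(\mathbf{e}_b-\mathbf{e}_q)=-(\mathbf{e}_a-\mathbf{e}_b)$ to land on $U_{L_G}(\tau)(\mathbf{e}_a-\mathbf{e}_q)=\gamma(\mathbf{e}_a-\mathbf{e}_q)$, which is exactly the hypothesis, and then invokes the symmetry of Pair-LPST (the transition matrix is symmetric) to state the transfer from $\mathbf{e}_a-\mathbf{e}_q$ to $\mathbf{e}_b-\mathbf{e}_q$. The paper's choice avoids the automorphism lemma entirely at the cost of an implicit appeal to symmetry of Pair-LPST; your choice makes the direction of transfer explicit at the cost of one extra (but clean and correct) twin-vertex argument. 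Both hinge on $\exp(-2\mathrm{i}\alpha\tau)=-1$; note that you apply the theorem's stated hypothesis $2\alpha\tau\in\pi(2\mathbb{Z}+1)$ correctly, whereas the paper's proof writes $\alpha\tau\in\pi(2\mathbb{Z}+1)$, which is a typo (it would give $\exp(-2\mathrm{i}\alpha\tau)=1$, not $-1$; the corollaries confirm the intended condition is the one on $2\alpha\tau$).
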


\begin{proof}
From the proof of Theorem \ref{Pair-LPST-Twin-1} (a) and (c), we can immediately verify (a) and (c). For (b), note that
$$
M(\mathbf{e}_{b}-\mathbf{e}_{q}) =\left(\mathbf{e}_{a}-\mathbf{e}_{b}\right) \left(\mathbf{e}_{a}-\mathbf{e}_{b}\right)^{T} \left(\mathbf{e}_{b}-\mathbf{e}_{q}\right) =-\left(\mathbf{e}_{a}-\mathbf{e}_{b}\right).
$$
Thus, if $\alpha\tau\in \pi(2\mathbb{Z}+1)$, then
\begin{align*}
  U_{L_{G+\alpha\{a,b\}}}(\tau)\left(\mathbf{e}_{b}-\mathbf{e}_{q}\right) &=U_{L_G}(\tau)\left(I_n- M\right)\left(\mathbf{e}_{b}-\mathbf{e}_{q}\right)\\
  &=U_{L_G}(\tau)\left(\mathbf{e}_{a}-\mathbf{e}_{q}\right).
\end{align*}
By the definition of Pair-LPST, $G+\alpha\{a,b\}$ has Pair-LPST at time $\tau$ from  $\mathbf{e}_{a}-\mathbf{e}_{q}$ to $\mathbf{e}_{b}-\mathbf{e}_{q}$ for $q\not=a, b$.
\qed
\end{proof}

The following result was given in \cite[Corollary~5.4]{ChG19}, where the fact \cite[Theorem~5.2]{ChG19} that a pair state in a graph has Pair-LPST if and only if it has Pair-LPST in its complement was used to prove this result. Here, we use Theorem \ref{Pair-LPST-Twin-2} (b) to reconfirm this result.

\begin{cor}\label{CompleteDeletion-1}
Let $K_{n}$ be a complete graph on $n$ vertices with the vertex set $V\left(K_{n}\right)=\left\{1, 2, \ldots, n\right\}$. Let $a$ and $b$ be two distinct vertices of $K_{n}$. Then $K_{n}-\{a,b\}$ has Pair-LPST from $\mathbf{e}_{a}-\mathbf{e}_{q}$ to $\mathbf{e}_{b}-\mathbf{e}_{q}$ for all $q \in\left\{1, 2, \ldots, n\right\}\setminus\{a,b\}$.
\end{cor}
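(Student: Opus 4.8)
The plan is to realize $K_n-\{a,b\}$ as an edge perturbation of $K_n$ between twin vertices and then invoke Theorem \ref{Pair-LPST-Twin-2}(b). First I would note that in $K_n$ every vertex is adjacent to every other vertex, so for the chosen $a$ and $b$ we have $N(a)\setminus\{b\}=V(K_n)\setminus\{a,b\}=N(b)\setminus\{a\}$; hence $a$ and $b$ are twin vertices in $K_n$. Taking $\alpha=-1$, the edge perturbed graph $K_n+\alpha\{a,b\}$ is exactly $K_n-\{a,b\}$, so Lemma \ref{TwinTrans-1} and Theorem \ref{Pair-LPST-Twin-2} become available with $G=K_n$.

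Next I would check that the pair state $\mathbf{e}_a-\mathbf{e}_q$ is periodic in $K_n$ at a convenient time. Since $L_{K_n}=nI_n-J_n$, where $J_n$ is the all-ones matrix, its eigenvalue $0$ has eigenvector $\mathbf{1}$ and its eigenvalue $n$ has eigenspace $\mathbf{1}^{\perp}$. As $\mathbf{e}_a-\mathbf{e}_q$ is orthogonal to $\mathbf{1}$, it is an eigenvector of $L_{K_n}$ for the eigenvalue $n$, so $U_{L_{K_n}}(t)(\mathbf{e}_a-\mathbf{e}_q)=e^{-\mathrm{i}nt}(\mathbf{e}_a-\mathbf{e}_q)$ for every $t\in\mathbb{R}$. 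In particular $\mathbf{e}_a-\mathbf{e}_q$ is periodic at every time, and hence at $\tau=\pi/2$.

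Finally I would verify the arithmetic hypothesis of Theorem \ref{Pair-LPST-Twin-2}(b): with $\alpha=-1$ and $\tau=\pi/2$ one has $2\alpha\tau=-\pi\in\pi(2\mathbb{Z}+1)$. Therefore Theorem \ref{Pair-LPST-Twin-2}(b) applies with $p=a$ and the given $q\neq a,b$, yielding that $K_n-\{a,b\}$ has Pair-LPST at time $\pi/2$ from $\mathbf{e}_a-\mathbf{e}_q$ to $\mathbf{e}_b-\mathbf{e}_q$. Since $q\in\{1,\dots,n\}\setminus\{a,b\}$ was arbitrary, the statement follows.

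There is no genuine obstacle here: the argument is essentially bookkeeping built on the earlier theorem. The two points that require a little care are (i) recognizing that deleting an edge is the perturbation with parameter $\alpha=-1$ and checking that this produces the required parity $2\alpha\tau\in\pi(2\mathbb{Z}+1)$ at $\tau=\pi/2$; and (ii) confirming that any pair state supported off the all-ones direction is automatically an eigenvector, hence periodic, in the complete graph. Both are immediate from the Laplacian spectrum of $K_n$.
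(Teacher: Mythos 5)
Your proof is correct and follows essentially the same route as the paper: realize $K_n-\{a,b\}$ as $K_n+\alpha\{a,b\}$ with $\alpha=-1$, note that $a$ and $b$ are twins, take $\tau=\pi/2$ so that $2\alpha\tau=-\pi\in\pi(2\mathbb{Z}+1)$, and invoke Theorem \ref{Pair-LPST-Twin-2}(b). The only difference is that where the paper cites Chen's thesis for the periodicity of edges in $K_n$, you verify it directly from the spectrum of $L_{K_n}=nI_n-J_n$, which makes the argument self-contained.
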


\begin{proof}
Note \cite[Corollary~5.1.5]{QCh18} that every edge in a complete graph is periodic  for any time $\tau$. Note also that $a$ and $b$ are twin vertices. Choose $\tau=\frac{\pi}{2}$ and $\alpha=-1$, then $2\alpha\tau=-\pi\in \pi(2\mathbb{Z}+1)$. By Theorem \ref{Pair-LPST-Twin-2} (b), $K_{n}-\{a,b\}$ has Pair-LPST from $\mathbf{e}_{a}-\mathbf{e}_{q}$ to $\mathbf{e}_{b}-\mathbf{e}_{q}$ at time $\tau=\frac{\pi}{2}$ for all $q \in\left\{1, 2, \ldots, n\right\}\setminus\{a,b\}$.
\qed
\end{proof}


More generally, by Theorem \ref{Pair-LPST-Twin-2}, we obtain the following result.

\begin{cor}
Let $K_{n}$ be a complete graph on $n$ vertices with the vertex set $V\left(K_{n}\right)=\left\{1, 2, \ldots, n\right\}$. Let $S$ be a matching of $K_n$, and let $K_n-S$ denote the graph obtained from $K_n$ by deleting all edges in $S$. Suppose that $a$ and $b$ are two end vertices of an edge in $S$. Then $K_n-S$ has Pair-LPST from $\mathbf{e}_{a}-\mathbf{e}_{q}$ to $\mathbf{e}_{b}-\mathbf{e}_{q}$ for all $q \in\left\{1, 2, \ldots, n\right\}\setminus\{a,b\}$.
\end{cor}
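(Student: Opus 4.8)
The plan is to realise $K_n-S$ as $K_n$ with the edges of $S$ deleted one after another, and to track the pair state $\mathbf e_a-\mathbf e_q$ through these deletions, using Theorem~\ref{Pair-LPST-Twin-2}(b) for the first deletion and Theorem~\ref{Pair-LPST-Twin-1}(c) for the remaining ones. Order $S$ as $e_1,\dots,e_m$ with $e_1=\{a,b\}$, set $G_0=K_n$ and $G_j=G_{j-1}-e_j$, so $G_m=K_n-S$. Every edge of $K_n$ is periodic at every time (as recalled in the proof of Corollary~\ref{CompleteDeletion-1}), so $\mathbf e_a-\mathbf e_q$ is periodic in $G_0$ at $\tau=\pi/2$; since $a$ and $b$ are twins in $K_n$, choosing $\alpha=-1$ (so that $2\alpha\tau=-\pi\in\pi(2\mathbb Z+1)$) and applying Theorem~\ref{Pair-LPST-Twin-2}(b), exactly as in Corollary~\ref{CompleteDeletion-1}, gives that $G_1=K_n-\{a,b\}$ has Pair-LPST from $\mathbf e_a-\mathbf e_q$ to $\mathbf e_b-\mathbf e_q$ at time $\tau=\pi/2$.

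I would then propagate this Pair-LPST through $G_1\to G_2\to\cdots\to G_m$. For $j\ge 2$ write $e_j=\{c,d\}$; since $S$ is a matching, $c$ and $d$ are endpoints only of $e_j$ among the edges of $S$, so they still have degree $n-1$ in $G_{j-1}$, and hence are twins in $G_{j-1}$ with $N(c)\setminus\{d\}=N(d)\setminus\{c\}=\{1,\dots,n\}\setminus\{c,d\}$. If moreover $q\notin\{c,d\}$ (note $a,b\notin\{c,d\}$ automatically), then $(\mathbf e_c-\mathbf e_d)^{\top}(\mathbf e_a-\mathbf e_q)=0$, so the computation in the proof of Theorem~\ref{Pair-LPST-Twin-1}(c) shows $\mathbf e_a-\mathbf e_q$ has the same evolution under $U_{L_{G_{j-1}}}$ and under $U_{L_{G_j}}$, whence $G_j$ inherits the Pair-LPST from $\mathbf e_a-\mathbf e_q$ to $\mathbf e_b-\mathbf e_q$ at $\tau=\pi/2$. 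Iterating over $j=2,\dots,m$ gives the assertion for $K_n-S$, provided $q$ is not an end vertex of any $e_j$, i.e.\ provided $q$ is not saturated by $S$.

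The step I expect to be the real obstacle is precisely the failure of this proviso, i.e.\ the case where $q$ is matched by $S$, say $\{q,q'\}\in S$ with $q'\notin\{a,b\}$, for then deleting $\{q,q'\}$ genuinely disturbs $\mathbf e_a-\mathbf e_q$ and Theorem~\ref{Pair-LPST-Twin-1}(c) no longer applies. Here I would compute directly, using that $L_{K_n}=nI_n-J$ (with $J$ the all-ones matrix) has every difference vector $\mathbf e_i-\mathbf e_j$ as an eigenvector for the eigenvalue $n$. Letting $P$ be the orthogonal projection onto the span of $\{\,\mathbf e_c-\mathbf e_d:\{c,d\}\in S\,\}$, one has $L_{K_n-S}=L_{K_n}-2P$; moreover $JP=PJ=0$ because the range of $P$ is orthogonal to the all-ones vector, so $P$ commutes with $L_{K_n}$ and $U_{L_{K_n-S}}(t)=U_{L_{K_n}}(t)\bigl(I_n+(e^{2\mathrm{i}t}-1)P\bigr)$. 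At $t=\pi/2$, since $e^{2\mathrm{i}t}-1=-2$ and since $\mathbf e_a-\mathbf e_q$ together with $P(\mathbf e_a-\mathbf e_q)$ lie in the eigenspace of $L_{K_n}$ for the eigenvalue $n$ (on which $U_{L_{K_n}}(\pi/2)$ is the scalar $e^{-\mathrm{i}n\pi/2}$), this gives $U_{L_{K_n-S}}(\pi/2)(\mathbf e_a-\mathbf e_q)=e^{-\mathrm{i}n\pi/2}\bigl((\mathbf e_a-\mathbf e_q)-2P(\mathbf e_a-\mathbf e_q)\bigr)$. When $q$ is unsaturated, $P(\mathbf e_a-\mathbf e_q)=\tfrac12(\mathbf e_a-\mathbf e_b)$ and the right-hand side is $e^{-\mathrm{i}n\pi/2}(\mathbf e_b-\mathbf e_q)$, proving the statement in one line; but when $\{q,q'\}\in S$ one finds $P(\mathbf e_a-\mathbf e_q)=\tfrac12\bigl((\mathbf e_a-\mathbf e_b)+(\mathbf e_{q'}-\mathbf e_q)\bigr)$, and the same computation yields $e^{-\mathrm{i}n\pi/2}(\mathbf e_b-\mathbf e_{q'})$, so the transfer lands on $\mathbf e_b-\mathbf e_{q'}$ rather than on $\mathbf e_b-\mathbf e_q$. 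Accordingly I would state the conclusion for $q$ not saturated by $S$, or, for saturated $q$, replace the target $\mathbf e_b-\mathbf e_q$ by $\mathbf e_b-\mathbf e_{q'}$ with $\{q,q'\}\in S$.
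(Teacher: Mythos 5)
Your proposal is correct, and it in fact exposes a genuine error in the corollary as stated. Your direct computation with $L_{K_n-S}=nI_n-J-2P$ is right: the difference vectors $\mathbf{e}_c-\mathbf{e}_d$ over the edges of a matching are pairwise orthogonal, so $\sum_{\{c,d\}\in S}(\mathbf{e}_c-\mathbf{e}_d)(\mathbf{e}_c-\mathbf{e}_d)^{\top}=2P$ with $P$ a projection commuting with $J$, and at $\tau=\pi/2$ one gets $U_{L_{K_n-S}}(\pi/2)(\mathbf{e}_a-\mathbf{e}_q)=e^{-\mathrm{i}n\pi/2}\bigl((\mathbf{e}_a-\mathbf{e}_q)-2P(\mathbf{e}_a-\mathbf{e}_q)\bigr)$, which equals $e^{-\mathrm{i}n\pi/2}(\mathbf{e}_b-\mathbf{e}_q)$ exactly when $q$ is unsaturated by $S$ and equals $e^{-\mathrm{i}n\pi/2}(\mathbf{e}_b-\mathbf{e}_{q'})$ when $\{q,q'\}\in S$. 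Moreover, in the saturated case the coefficient of $\mathbf{e}_a$ in $U(t)(\mathbf{e}_a-\mathbf{e}_q)$ vanishes only when $e^{2\mathrm{i}t}=-1$, and then the coefficient of $\mathbf{e}_q$ vanishes too, so there is no time at which the state lands on a multiple of $\mathbf{e}_b-\mathbf{e}_q$. A concrete check: for $n=4$ and $S=\{1\sim 2,\,3\sim 4\}$ the graph $K_4-S$ is a $4$-cycle, and $U(\pi/2)(\mathbf{e}_1-\mathbf{e}_3)=\mathbf{e}_2-\mathbf{e}_4$, not $\pm(\mathbf{e}_2-\mathbf{e}_3)$. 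So the conclusion holds only for $q$ not covered by $S$, and for covered $q$ the correct target is $\mathbf{e}_b-\mathbf{e}_{q'}$, exactly as you say.

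For comparison, the paper's own argument deletes the edges of $S\setminus\{a\sim b\}$ one at a time, uses Theorem~\ref{Pair-LPST-Twin-2}(c) to keep the pair states of the remaining matching edges periodic, and ends by invoking Theorem~\ref{Pair-LPST-Twin-2}(b) as in Corollary~\ref{CompleteDeletion-1}. The gap there is that Theorem~\ref{Pair-LPST-Twin-2}(b) requires $\mathbf{e}_a-\mathbf{e}_q$ (with $q\neq a,b$) to be periodic at $\tau=\pi/2$ in $K_n-(S\setminus\{a\sim b\})$, whereas the iteration only establishes periodicity of $\mathbf{e}_a-\mathbf{e}_b$; and for saturated $q$ the needed periodicity genuinely fails ($\mathbf{e}_a-\mathbf{e}_q$ is sent to $\mathbf{e}_a-\mathbf{e}_{q'}$ at $\pi/2$). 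Your first, iterative argument tracks the correct state $\mathbf{e}_a-\mathbf{e}_q$ through the deletions via Theorem~\ref{Pair-LPST-Twin-1}(c), which is why it correctly isolates $q\notin\{c,d\}$ as the needed hypothesis; your second, spectral argument settles all cases at once. The only thing to add is an explicit remark that the corollary should be restated with the proviso that $q$ is unsaturated (or with the target replaced by $\mathbf{e}_b-\mathbf{e}_{q'}$).
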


\begin{proof}
With a slight abuse of notation, let $a\sim b$ denote the edge with two end vertices $a$ and $b$. Let $c\sim d$ be an edge in $S$, different from $a\sim b$. Then $c$ and $d$ are twin vertices. By Theorem \ref{Pair-LPST-Twin-2} (c), for any edge $x\sim y$ in $S\setminus\{c\sim d\}$, $\mathbf{e}_{x}-\mathbf{e}_{y}$ is periodic in  $K_n-\{c,d\}$ at any time $\tau$. Let $f\sim g$ be another edge in $S$, different from $a\sim b$ and $c\sim d$. Then $f$ and $g$ are twin vertices. By Theorem \ref{Pair-LPST-Twin-2} (c), for any edge $x\sim y$ in $S\setminus\{c\sim d, f\sim g\}$, $\mathbf{e}_{x}-\mathbf{e}_{y}$ is periodic in  $K_n-\{c,d\}-\{f,g\}$ at any time $\tau$. Keeping on this operation, until we obtain the graph $K_n-S\setminus\{a\sim b\}$. Here, $\mathbf{e}_{a}-\mathbf{e}_{b}$ is periodic in $K_n-S\setminus\{a\sim b\}$ at any time $\tau$. And, $a$ and $b$ are twin vertices in $K_n-S\setminus\{a\sim b\}$. Similar to the proof of Corollary \ref{CompleteDeletion-1}, choose $\tau=\frac{\pi}{2}$ and $\alpha=-1$, then $2\alpha\tau=-\pi\in \pi(2\mathbb{Z}+1)$. By Theorem \ref{Pair-LPST-Twin-2} (b), $K_{n}-S$ has Pair-LPST from $\mathbf{e}_{a}-\mathbf{e}_{q}$ to $\mathbf{e}_{b}-\mathbf{e}_{q}$ at time $\tau=\frac{\pi}{2}$ for all $q \in\left\{1, 2, \ldots, n\right\}\setminus\{a,b\}$.
\qed
\end{proof}


\section{Pair-LPGST in edge perturbed graphs}
\label{Sec:LPST-EPG-2}

In this section, we give the existence of Pair-LPGST in graphs with an edge perturbation between twin vertices. By refining the proof of Theorem \ref{Pair-LPST-Twin-1}, one can easily verify the following result.

\begin{theorem}\label{Pair-LPGST-Twin-1}
Let $G$, $a$, $b$, $M$ and $\alpha$ be as in Lemma \ref{TwinTrans-1}. Suppose that $\mathbf{e}_{p}-\mathbf{e}_{q}$ has Pair-LPGST at time $\tau$ in the unperturbed graph $G$.
\begin{itemize}
  \item[\rm (a)] If $p=a$ and $q=b$, then $\mathbf{e}_{p}-\mathbf{e}_{q}$ has Pair-LPGST at time $\tau$ in the edge perturbed graph $G+\alpha\{a,b\}$.
  \item[\rm (b)] If $p=a$ and $q\not=a, b$, then $\mathbf{e}_{p}-\mathbf{e}_{q}$ has Pair-LPGST at time $\tau$ in the edge perturbed graph $G+\alpha\{a,b\}$ provided that $\alpha\tau\in \pi\mathbb{Z}$.
  \item[\rm (c)] If $p\not=a, b$ and $q\not=a, b$, then $\mathbf{e}_{p}-\mathbf{e}_{q}$ has Pair-LPGST at time $\tau$ in the edge perturbed graph $G+\alpha\{a,b\}$.
\end{itemize}
\end{theorem}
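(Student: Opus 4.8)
The plan is to follow the proof of Theorem~\ref{Pair-LPST-Twin-1} almost line by line, replacing the exact identity ``$U_{L_{G+\alpha\{a,b\}}}(\tau)(\mathbf e_p-\mathbf e_q)=\gamma(\mathbf e_c-\mathbf e_d)$'' by the approximate statement built into the definition of Pair-LPGST. Write $G'=G+\alpha\{a,b\}$, and fix a target pair $\{c,d\}$ witnessing the Pair-LPGST hypothesis for $\mathbf e_p-\mathbf e_q$ in $G$. As in the proof of Theorem~\ref{Pair-LPST-Twin-1}, the only inputs needed are Lemma~\ref{TwinTrans-1} together with the three elementary identities $M(\mathbf e_a-\mathbf e_b)=2(\mathbf e_a-\mathbf e_b)$, $M(\mathbf e_a-\mathbf e_q)=\mathbf e_a-\mathbf e_b$ for $q\neq a,b$, and $M(\mathbf e_p-\mathbf e_q)=0$ for $p,q\neq a,b$. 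In each case the task is to compare, for suitable times $t$, the quantity $\left|\tfrac12(\mathbf e_c-\mathbf e_d)^\top U_{L_{G'}}(t)(\mathbf e_p-\mathbf e_q)\right|^2$ (which, since $L_{G'}$ is symmetric, is exactly the fidelity occurring in the definition of Pair-LPGST) with its unperturbed analogue $\left|\tfrac12(\mathbf e_c-\mathbf e_d)^\top U_{L_G}(t)(\mathbf e_p-\mathbf e_q)\right|^2$.

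For part (a) ($p=a$, $q=b$), the first identity and Lemma~\ref{TwinTrans-1} give $U_{L_{G'}}(t)(\mathbf e_a-\mathbf e_b)=\exp(-2\mathrm i\alpha t)\,U_{L_G}(t)(\mathbf e_a-\mathbf e_b)$ for all $t$, exactly as in Theorem~\ref{Pair-LPST-Twin-1}(a). Since $|\exp(-2\mathrm i\alpha t)|=1$, the two fidelities above agree for every $t$ and every target; hence, given $\epsilon>0$, any time at which the $G$-fidelity is at least $1-\epsilon$ gives the same bound in $G'$. Part (c) ($p,q\neq a,b$) is even more direct: the third identity gives $U_{L_{G'}}(t)(\mathbf e_p-\mathbf e_q)=U_{L_G}(t)(\mathbf e_p-\mathbf e_q)$ for all $t$, so the two fidelity functions are literally identical and Pair-LPGST transfers verbatim, to the same target and at the same times.

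For part (b) ($p=a$, $q\neq a,b$), the second identity and Lemma~\ref{TwinTrans-1} yield
$$
U_{L_{G'}}(t)(\mathbf e_a-\mathbf e_q)=U_{L_G}(t)\Big[(\mathbf e_a-\mathbf e_q)+\tfrac12\big(\exp(-2\mathrm i\alpha t)-1\big)(\mathbf e_a-\mathbf e_b)\Big].
$$
When $\alpha t\in\pi\mathbb Z$ we have $\exp(-2\mathrm i\alpha t)=1$, the bracketed correction vanishes, and $U_{L_{G'}}(t)(\mathbf e_a-\mathbf e_q)=U_{L_G}(t)(\mathbf e_a-\mathbf e_q)$. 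Thus, given $\epsilon>0$, pick a time $\tau$ with $\alpha\tau\in\pi\mathbb Z$ at which the transfer $\mathbf e_a-\mathbf e_q\to\mathbf e_c-\mathbf e_d$ in $G$ has fidelity at least $1-\epsilon$ — this is exactly what the proviso guarantees is possible — and the same bound then holds in $G'$. Letting $\epsilon\to0$ gives Pair-LPGST from $\mathbf e_a-\mathbf e_q$ to $\mathbf e_c-\mathbf e_d$ in $G'$.

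The one point that deserves care, and the only place where (b) is not a mechanical transcription, is this last step: in the perfect-transfer setting one is handed a single time $\tau$, whereas for Pair-LPGST one must be able to choose the near-optimal times inside the arithmetic progression $\{t:\alpha t\in\pi\mathbb Z\}$, which is precisely the content of the hypothesis ``provided that $\alpha\tau\in\pi\mathbb Z$''. Without it the correction term $\tfrac12(\exp(-2\mathrm i\alpha t)-1)\,U_{L_G}(t)(\mathbf e_a-\mathbf e_b)$ need not become negligible along a Pair-LPGST sequence, since the twin pair state $\mathbf e_a-\mathbf e_b$ carries no control of its own under the evolution. Everything else is the Theorem~\ref{Pair-LPST-Twin-1} computation with ``$=\gamma(\mathbf e_c-\mathbf e_d)$'' softened to ``fidelity $\geq1-\epsilon$''.
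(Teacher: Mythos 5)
Your proposal is correct and is exactly the argument the paper intends: the paper gives no explicit proof of this theorem, stating only that it follows ``by refining the proof of Theorem~\ref{Pair-LPST-Twin-1},'' and your write-up supplies precisely that refinement, using Lemma~\ref{TwinTrans-1} and the three identities $M(\mathbf e_a-\mathbf e_b)=2(\mathbf e_a-\mathbf e_b)$, $M(\mathbf e_a-\mathbf e_q)=\mathbf e_a-\mathbf e_b$, $M(\mathbf e_p-\mathbf e_q)=0$ to reduce each perturbed fidelity to the unperturbed one. Your remark on part (b) --- that the near-optimal times must be chosen within $\{t:\alpha t\in\pi\mathbb Z\}$, which is what the proviso encodes --- is the one genuinely non-mechanical point, and you handle it correctly.
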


\begin{example}
{\em Let $\Gamma$ be a group with the identity element $1_e$. Let $S$ be a subset of $\Gamma$ satisfying that $1_e\notin S$ and $S=S^{-1}=\left\{s^{-1}~|~s\in S\right\}$. The \emph{Cayley graph} $\Cay(\Gamma,S)$ is defined to be the graph with the vertex set $\Gamma$ and with vertices $a$ and $b$ adjacent if and only if $ ab^{-1}\in S$. By refining the proof of \cite[Lemma~3.1]{LCXC21}, we obtain that if $a$ and $b$ are two distinct vertices of $\Cay(\Gamma,S)$ and $\Cay(\Gamma,S)$ has \emph{Laplacian pretty good state transfer} (LPGST) from $a$ to $b$ at time $\tau$, then $\Cay(\Gamma,S)$ has Pair-LPGST from $\mathbf{e}_a-\mathbf{e}_{a\gamma}$ to $\mathbf{e}_b-\mathbf{e}_{b\gamma}$ at time $\tau$, for any $\gamma\in \Gamma\setminus\{a^{-1}b, b^{-1}a\}$.

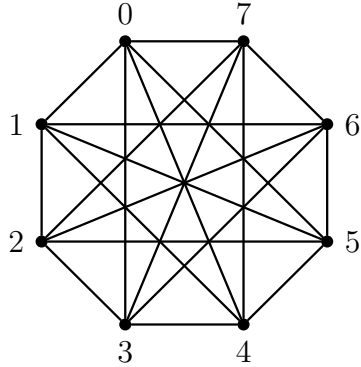
\begin{figure}[H]
\begin{center}
\begin{tikzpicture}[x=1.1cm, y=1.1cm]
\tikzstyle{vertex}=[circle,inner sep=1.6pt, minimum size=0.1pt]

\vertex (a)[fill] at (0,0)[label=above:$0$]{};
\vertex (b)[fill] at (-1,-1)[label=left:$1$]{};
\vertex (c)[fill] at (-1,-2.414)[label=left:$2$]{};
\vertex (d)[fill] at (0,-3.414)[label=below:$3$]{};
\vertex (e)[fill] at (1.414,-3.414)[label=below:$4$]{};
\vertex (f)[fill] at (2.414,-2.414)[label=right:$5$]{};
\vertex (g)[fill] at (2.414,-1)[label=right:$6$]{};
\vertex (h)[fill] at (1.414,0)[label=above:$7$]{};

\draw[line width=.3mm,line cap=round](g)--(h);
\draw[line width=.3mm,line cap=round](f)--(g);
\draw[line width=.3mm,line cap=round](e)--(f);
\draw[line width=.3mm,line cap=round](e)--(h);
\draw[line width=.3mm,line cap=round](d)--(e);
\draw[line width=.3mm,line cap=round](d)--(g);
\draw[line width=.3mm,line cap=round](d)--(h);
\draw[line width=.3mm,line cap=round](c)--(d);
\draw[line width=.3mm,line cap=round](c)--(f);
\draw[line width=.3mm,line cap=round](c)--(h);
\draw[line width=.3mm,line cap=round](c)--(g);
\draw[line width=.3mm,line cap=round](b)--(c);
\draw[line width=.3mm,line cap=round](b)--(e);
\draw[line width=.3mm,line cap=round](b)--(f);
\draw[line width=.3mm,line cap=round](b)--(g);
\draw[line width=.3mm,line cap=round](a)--(b);
\draw[line width=.3mm,line cap=round](a)--(d);
\draw[line width=.3mm,line cap=round](a)--(e);
\draw[line width=.3mm,line cap=round](a)--(f);
\draw[line width=.3mm,line cap=round](a)--(h);
\end{tikzpicture}
\end{center}
\vspace{-0.25cm}
\caption{The Cayley graph $\Cay(\mathbb{Z}_8,S)$ with $S=\{1,3,4,5,7\}$}
\label{CayleyExam-Fig-1}
\end{figure}

Let $\Gamma=\mathbb{Z}_8$ and $S=\{1,3,4,5,7\}$. By \cite[Corollary 2.9]{HPal2021}, the Cayley graph $\Cay(\mathbb{Z}_8,S)$, shown in Figure \ref{CayleyExam-Fig-1}, has LPGST from $a$ to $a+4$ at time $\tau=\frac{\pi}{2}$, for any $a\in \mathbb{Z}_8$. Choose $a=0$ and $\gamma=1$. Then $\Cay(\mathbb{Z}_8,S)$ has Pair-LPGST from $\mathbf{e}_0-\mathbf{e}_{1}$ to $\mathbf{e}_4-\mathbf{e}_{5}$ at time $\tau=\frac{\pi}{2}$. Note that $2$ and $6$ are twin vertices. Thus, by Theorem \ref{Pair-LPGST-Twin-1} (c),  the edge perturbed graph  $\Cay(\mathbb{Z}_8,S)-\{2, 6\}$, shown in Figure \ref{CayleyExam-Fig-2}, has Pair-LPGST from $\mathbf{e}_0-\mathbf{e}_{1}$ to $\mathbf{e}_4-\mathbf{e}_{5}$ at time $\tau=\frac{\pi}{2}$.

\begin{figure}[H]
\begin{center}
\begin{tikzpicture}[x=1.1cm, y=1.1cm]
\tikzstyle{vertex}=[circle,inner sep=1.6pt, minimum size=0.1pt]

\vertex (a)[fill] at (0,0)[label=above:$0$]{};
\vertex (b)[fill] at (-1,-1)[label=left:$1$]{};
\vertex (c)[fill] at (-1,-2.414)[label=left:$2$]{};
\vertex (d)[fill] at (0,-3.414)[label=below:$3$]{};
\vertex (e)[fill] at (1.414,-3.414)[label=below:$4$]{};
\vertex (f)[fill] at (2.414,-2.414)[label=right:$5$]{};
\vertex (g)[fill] at (2.414,-1)[label=right:$6$]{};
\vertex (h)[fill] at (1.414,0)[label=above:$7$]{};

\draw[line width=.3mm,line cap=round](g)--(h);
\draw[line width=.3mm,line cap=round](f)--(g);
\draw[line width=.3mm,line cap=round](e)--(f);
\draw[line width=.3mm,line cap=round](e)--(h);
\draw[line width=.3mm,line cap=round](d)--(e);
\draw[line width=.3mm,line cap=round](d)--(g);
\draw[line width=.3mm,line cap=round](d)--(h);
\draw[line width=.3mm,line cap=round](c)--(d);
\draw[line width=.3mm,line cap=round](c)--(f);
\draw[line width=.3mm,line cap=round](c)--(h);
%
\draw[line width=.3mm,line cap=round](b)--(c);
\draw[line width=.3mm,line cap=round](b)--(e);
\draw[line width=.3mm,line cap=round](b)--(f);
\draw[line width=.3mm,line cap=round](b)--(g);
\draw[line width=.3mm,line cap=round](a)--(b);
\draw[line width=.3mm,line cap=round](a)--(d);
\draw[line width=.3mm,line cap=round](a)--(e);
\draw[line width=.3mm,line cap=round](a)--(f);
\draw[line width=.3mm,line cap=round](a)--(h);
\end{tikzpicture}
\end{center}
\vspace{-0.25cm}
\caption{The edge perturbed graph $\Cay(\mathbb{Z}_8,S)-\{2, 6\}$  with $S=\{1,3,4,5,7\}$}
\label{CayleyExam-Fig-2}
\end{figure}
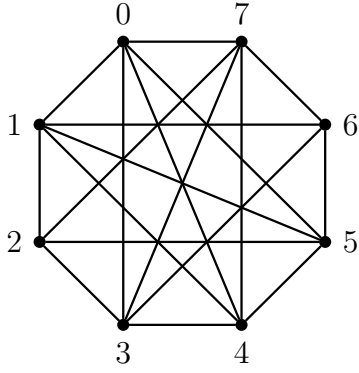

Similarly, by Theorem \ref{Pair-LPGST-Twin-1} (c), we can continue to delete the edge between vertices $3$ and $7$ to obtain a new graph having Pair-LPGST from $\mathbf{e}_0-\mathbf{e}_{1}$ to $\mathbf{e}_4-\mathbf{e}_{5}$ at time $\tau=\frac{\pi}{2}$,  shown in Figure \ref{CayleyExam-Fig-3}.

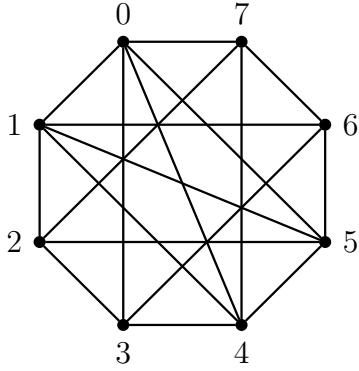
\begin{figure}[H]
\begin{center}
\begin{tikzpicture}[x=1.1cm, y=1.1cm]
\tikzstyle{vertex}=[circle,inner sep=1.6pt, minimum size=0.1pt]

\vertex (a)[fill] at (0,0)[label=above:$0$]{};
\vertex (b)[fill] at (-1,-1)[label=left:$1$]{};
\vertex (c)[fill] at (-1,-2.414)[label=left:$2$]{};
\vertex (d)[fill] at (0,-3.414)[label=below:$3$]{};
\vertex (e)[fill] at (1.414,-3.414)[label=below:$4$]{};
\vertex (f)[fill] at (2.414,-2.414)[label=right:$5$]{};
\vertex (g)[fill] at (2.414,-1)[label=right:$6$]{};
\vertex (h)[fill] at (1.414,0)[label=above:$7$]{};

\draw[line width=.3mm,line cap=round](g)--(h);
\draw[line width=.3mm,line cap=round](f)--(g);
\draw[line width=.3mm,line cap=round](e)--(f);
\draw[line width=.3mm,line cap=round](e)--(h);
\draw[line width=.3mm,line cap=round](d)--(e);
\draw[line width=.3mm,line cap=round](d)--(g);
%
\draw[line width=.3mm,line cap=round](c)--(d);
\draw[line width=.3mm,line cap=round](c)--(f);
\draw[line width=.3mm,line cap=round](c)--(h);
%
\draw[line width=.3mm,line cap=round](b)--(c);
\draw[line width=.3mm,line cap=round](b)--(e);
\draw[line width=.3mm,line cap=round](b)--(f);
\draw[line width=.3mm,line cap=round](b)--(g);
\draw[line width=.3mm,line cap=round](a)--(b);
\draw[line width=.3mm,line cap=round](a)--(d);
\draw[line width=.3mm,line cap=round](a)--(e);
\draw[line width=.3mm,line cap=round](a)--(f);
\draw[line width=.3mm,line cap=round](a)--(h);
\end{tikzpicture}
\end{center}
\vspace{-0.25cm}
\caption{The edge perturbed graph $\Cay(\mathbb{Z}_8,S)-\{2, 6\}-\{3,7\}$  with $S=\{1,3,4,5,7\}$}
\label{CayleyExam-Fig-3}
\end{figure}

}
\end{example}

\section{Conclusions}

As stated in the Introduction, up until now, only few graphs have been proved to have Pair-LPST. In this paper, we give some sufficient conditions for edge perturbed graphs to have Pair-LPST as well as Pair-LPGST. Those sufficient conditions can enable us to construct infinitely many new graphs having Pair-LPST as well as Pair-LPGST. In this paper, we only construct a few graphs having Pair-LPST as well as Pair-LPGST. Thus, other examples constructed by those sufficient conditions are also greatly welcome.


\bigskip

\noindent \textbf{Acknowledgements}~~The authors greatly appreciate Professor Hiranmoy Pal who inspires the authors to complete this paper.




\end{document}